\documentclass[sigplan,twocolumn,nonacm]{acmart}

\setcopyright{none}
\renewcommand\footnotetextcopyrightpermission[1]{}

\usepackage{algorithm}
\usepackage{algpseudocode}
\usepackage{amsmath,amsthm}
\usepackage{graphicx}
\usepackage{xcolor}

\theoremstyle{plain}
\newtheorem{theorem}{Theorem}

\theoremstyle{definition}
\newtheorem{definition}[theorem]{Definition}

\theoremstyle{remark}

\begin{document}

\title{Multi-Round Visibility: A Post-Consensus Ordering Layer for DAG-Based BFT}



\author{Pengkun Ren}
\affiliation{%
  \institution{RMIT University}
  \city{Melbourne}
  \state{VIC}
  \country{Australia}
}
\email{s4038427@student.rmit.edu.au}

\author{Hai Dong}
\affiliation{%
  \institution{RMIT University}
  \city{Melbourne}
  \state{VIC}
  \country{Australia}
}
\email{hai.dong@rmit.edu.au}

\author{Zahir Tari}
\affiliation{%
  \institution{RMIT University}
  \city{Melbourne}
  \state{VIC}
  \country{Australia}
}
\email{zahirtari@gmail.com}

\author{Nasrin Sohrabi}
\affiliation{%
  \institution{Deakin University}
  \city{Melbourne}
  \state{VIC}
  \country{Australia}
}
\email{Sohrabi.nasrin.ac@gmail.com}

\renewcommand{\shortauthors}{Ren et al.}

\begin{abstract}
Directed acyclic graph (DAG)-based Byzantine Fault-Tolerant (BFT) protocols achieve high throughput by decoupling dissemination from agreement and allowing many vertices to be committed concurrently. This same concurrency, however, weakens the ordering evidence available at the execution boundary: once units are committed as part of a shared DAG frontier, their final linearization is driven by traversal or deterministic tie-breaking rather than verifiable structural precedence evidence. Prior fair-ordering designs address ordering ambiguity by collecting, validating, or reconstructing stronger transaction-level ordering evidence within the consensus or transaction-processing workflow. While effective for their target semantics, this couples ordering with agreement and places ordering logic on the protocol path.
This paper presents Multi-Round Visibility (MRV), a post-consensus structural ordering layer for DAG-based BFT. MRV reinterprets the committed DAG as an ordering evidence substrate. The key observation is that committed vertices already carry authenticated creator, round, and ancestry metadata, allowing replicas to compute multi-round structural visibility without adding consensus-path messages. MRV accumulates this visibility within a bounded evidence horizon, compares concurrently committed atomic units of fairness (AUFs) only after they coexist in the committed DAG, and derives precedence constraints from one-sided Byzantine-robust visibility advantages. When the committed DAG does not support such a constraint, MRV exposes the remaining ambiguity and resolves it through deterministic graph completion rather than hiding it inside traversal rules.
We implement MRV on a Narwhal/Tusk-based prototype and evaluate it across deployment scales, workloads, and fault settings. Across 5--50 replicas, MRV preserves the high-throughput regime of the underlying DAG-BFT stack, reaching up to 210K TPS with bounded evidence collection and limited throughput impact. These results show that MRV can be integrated as a post-consensus ordering layer while keeping structural-ordering logic off the latency-sensitive consensus path.
\end{abstract}



\maketitle

\section{Introduction}
BFT consensus gives replicated systems a consistent committed history even in the presence of malicious actors \cite{zhang2024reaching}. DAG-based BFT protocols change how that history is assembled and later linearized \cite{wang2023sok}. Instead of a single sequential proposal stream, they construct a causally referenced partial order of concurrently produced units, later deriving a deterministic execution order by locally interpreting the resulting DAG. Recent architectures such as Narwhal and Tusk \cite{danezis2022narwhal} maximize throughput by decoupling dissemination from agreement, allowing large sets of concurrently produced vertices to be proposed and committed concurrently. This architectural gain, however, comes with an ordering cost. While the system must inevitably extend the committed partial order into a deterministic execution sequence, the concurrent nature of the DAG inherently blurs the original propagation history through asynchronous dissemination, cross-round aggregation, and wave-based commitment \cite{mahe2025order}. The challenge, therefore, is not whether such linearization is needed, but what evidence should constrain it once commitment no longer preserves those temporal semantics directly.

This structural ambiguity matters in applications where execution order affects economic or security outcomes. In decentralized exchanges and lending smart contracts \cite{daian2020flash, eskandari2019sok, qin2022quantifying, zhou2021high}, for example, two concurrent operations may be individually valid yet economically non-equivalent depending on which executes first. Such ordering sensitivity is one reason why Maximal Extractable Value (MEV) and transaction-ordering manipulation have become central concerns in blockchain systems \cite{gramlich2024maximal, malkhi2022maximal, baum2022sok, heimbach2022eliminating}. For replicated execution, this creates a design requirement: final order should be constrained by verifiable evidence whenever available, rather than by proposer discretion, traversal conventions, or arbitrary deterministic tie-breaking alone. A growing body of order-fairness work formalizes this requirement by deriving ordering constraints from collective observations of the network \cite{cachin2022quick, kelkar2022order, zhang2020byzantine, kursawe2020wendy, li2024sok, chen2024auncel}.

Existing approaches largely realize fairness by moving ordering logic deeply into the protocol workflow itself. In leader-based systems, Themis \cite{kelkar2023themis}  has the leader collect local transaction orderings, construct a fair proposal, and require replicas to validate that proposal. Pushing this into the DAG setting, DoD \cite{nagda2025dag} embeds fairness computation directly into the transaction-processing pipeline through explicit local-order, global-order, and order-finalization stages; it also requires clients to broadcast transactions to designated workers at every party so that the proposed order can be locally validated. These designs establish important baselines, but they also expose a recurring systems tension: fair-ordering is often incorporated into the protocol path, while making such ordering claims verifiable commonly relies on broad client dissemination assumptions.

Strong transaction-level receive-order fairness is not generally achievable in Byzantine settings without stronger synchrony assumptions \cite{kelkar2023themis}. This motivates a DAG-specific question: \textit{once a common execution slice has been committed, can the system still derive a principled deterministic order from the authenticated structural evidence that remains in the committed DAG?} At that stage, the natural comparison objects are the committed DAG units themselves, since they carry authenticated creator, round, and ancestry information. We refer to these units as atomic units of fairness (AUFs). In Narwhal/Tusk-style systems \cite{danezis2022narwhal}, an AUF corresponds naturally to a committed primary block or certified DAG vertex. When competing transactions reside in different AUFs, constraining the order of those AUFs becomes the post-commit lever for reducing arbitrary execution-order choices.

Our key architectural insight is that, in DAG-BFT systems, this ordering problem can be addressed at the post-commit DAG interpretation layer rather than inside the protocol workflow. Once a sub-DAG is committed, it already contains authenticated causal structure and cross-creator visibility signals that can be reused to constrain ordering. Because visibility is counted at creator granularity, repeated DAG paths do not amplify evidence, and a strong visibility imbalance must be reflected across distinct creators rather than along a single chain. This makes committed-DAG visibility a useful structural signal for post-commit linearization, though not a substitute for global arrival-time observations.

We instantiate this design in MRV, a pluggable ordering layer for DAG-based blockchains that runs after consensus. MRV uses structural evidence from the committed DAG to derive AUF ordering constraints local to each slice, while preserving the underlying consensus protocol and message semantics. It operationalizes this signal conservatively: precedence is introduced only when the committed DAG shows a one-sided visibility advantage after the compared AUFs coexist, while residual ambiguity remains explicit and is resolved deterministically. This placement exposes a favorable systems trade-off: MRV strengthens execution-ordering discipline while preserving the high-throughput consensus path and avoiding separate transaction-level dependency graphs. MRV realizes this design through three core mechanisms.

\textbf{Post-consensus structural interpretation}. MRV treats structural ordering as a deterministic interpretation layer over committed DAG outputs, rather than as part of the consensus path. This placement leaves dissemination, voting, and commit rules unchanged, allowing MRV to build on the safety and liveness guarantees of the base DAG-BFT stack.

\textbf{AUF-level visibility evidence after coexistence}. MRV operates on AUFs rather than individual transactions because AUFs are the committed objects that carry authenticated creator, round, and ancestry metadata. It compares AUF pairs only after both coexist in the committed DAG, so precedence evidence is derived from post-coexistence structural visibility rather than from globally comparable transaction-arrival times.

\textbf{Bounded evidence certification}. MRV accumulates creator-level structural visibility within a bounded post-commit horizon. It introduces a precedence constraint only when the committed DAG exhibits a one-sided Byzantine-robust visibility advantage; otherwise, it abstains and leaves the pair as residual ambiguity for deterministic completion. Thus, deterministic completion produces an executable order without creating additional evidence-backed precedence claims.

The contributions of this paper are summarized as follows:
\begin{itemize}
  \item
  A new architectural design point is identified in which fairness-aware ordering is decoupled from consensus and realized as a pluggable post-consensus interpretation layer over committed DAG structures, preserving the original consensus workflow and its baseline safety and liveness properties.

  \item
  An AUF-level structural evidence model is introduced to extract creator-level visibility evidence directly from the committed DAG, enabling ordering decisions to be based on authenticated structural asymmetry after coexistence rather than on external timing signals.

  \item 
  A conservative ordering mechanism is developed for MRV, where precedence constraints are added only under sufficient committed-DAG support through bounded evidence horizons, maturity thresholds, and explicit abstention in ambiguous cases.

  \item 
  An implementation of MRV over a Narwhal/Tusk prototype and systems evaluation across node scales, batch sizes, fault settings, and geo-distributed environments show that MRV adds bounded post-commit evidence collection with limited throughput overhead.
\end{itemize}

\section{Background}
In BFT consensus, the objective is not merely to agree on a single value, but to establish a consistent execution history for a deterministic state machine replicated across mutually untrusting nodes \cite{schneider1990implementing, xu2023survey, zhang2024reaching}. Classic protocols realize this through State Machine Replication (SMR) \cite{bessani2014state}: all correct replicas execute the same valid requests in the same order. From PBFT \cite{castro1999practical} to modern partially synchronous frameworks such as HotStuff \cite{yin2019hotstuff}, this process has typically been modeled as a leader extending a sequential log. Consequently, traditional BFT systems couple consensus and ordering tightly: the direct protocol output is already a totally ordered history of executable requests.

\subsection{DAG-based BFT and the Concurrency Cost}
While the single-leader architecture provides a natural foundation for sequential execution, its tight coupling of dissemination and agreement can become a scalability bottleneck \cite{wang2023sok}. A leader must both disseminate transaction payloads and drive the cryptographic agreement phases, so throughput is often constrained by the leader's network bandwidth.

DAG-based BFT protocols address this bottleneck by moving toward high-concurrency agreement. Recent designs decouple dissemination and availability from the agreement path, building on a broader line of asynchronous BFT work \cite{guo2020dumbo, gao2022dumbo}. Replicas continuously disseminate transaction batches and broadcast availability-certified vertices attesting that the underlying data are available to a quorum. The consensus layer then orders these lightweight metadata vertices rather than raw transaction payloads. This design pattern appears across DAG-based protocols, from fully asynchronous constructions such as DAG-Rider \cite{keidar2021all} to high-throughput Narwhal-style architectures \cite{danezis2022narwhal}. The dissemination layer also imposes the structure shown in Figure~\ref{fig:dag-bft-primer}. To create a valid vertex at a given round, a replica includes parent references to a quorum of vertices from the preceding round. These mandatory references weave the vertices into a DAG and encode authenticated causal history: if $B \in \mathsf{Anc}(A)$, then the creator of $A$ incorporated $B$ into the ancestry of $A$ before broadcasting it.

\begin{figure}[htbp]
    \centering
    \includegraphics[width=.9\columnwidth]{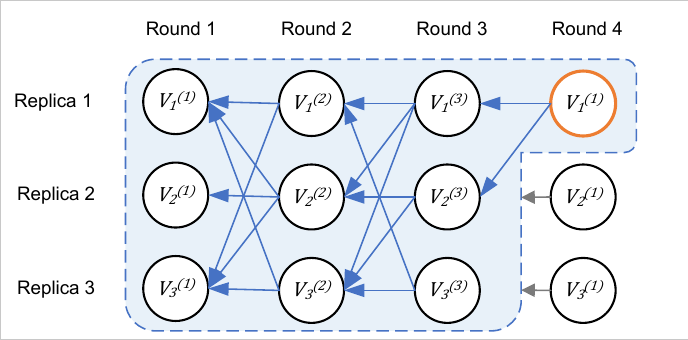}
    \caption{DAG-BFT architecture and committed execution slice. Replicas broadcast availability-certified vertices (circles) that reference a quorum of prior-round vertices (arrows), forming an authenticated causal DAG. Committing a leader vertex (orange) finalizes its previously uncommitted causal history, which forms the committed execution slice linearized by MRV (blue shaded region).}
    \label{fig:dag-bft-primer}
\end{figure}

Consensus protocols built atop this DAG, such as Tusk \cite{danezis2022narwhal} and Bullshark \cite{spiegelman2022bullshark}, operate over this topology rather than a sequential proposal stream. As shown by the shaded region in Figure~\ref{fig:dag-bft-primer}, the protocol periodically commits a leader vertex; this also commits its causal history, the sub-DAG of previously uncommitted vertices referenced by the leader. We call the newly delivered vertices from one such commit step a \textit{committed execution slice}.

This architectural shift changes the interface between consensus and execution. In a linear-log BFT protocol, the consensus output already resembles the execution order. In a DAG-BFT protocol, the natural output of a commit decision is instead a jointly authenticated committed sub-DAG: a set of concurrently produced vertices together with their causal dependencies. The protocol agrees on which vertices enter the shared history, but it does not natively provide a total order over the concurrently committed units. That order must be derived after commitment through traversal rules, deterministic tie-breaking, or an additional ordering layer. This is the concurrency cost of DAG-based BFT. High-throughput dissemination flattens fine-grained arrival signals at the execution boundary, while wave-based commitment exposes slices whose internal precedence is only partially constrained by causality. Deterministic traversal is sufficient for replica agreement, but it provides little explicit evidence for why one causally concurrent committed unit should precede another.

\subsection{Order-Fairness}
This ambiguity in the execution order is not merely a theoretical artifact. In execution-sensitive applications, such as decentralized exchanges and lending protocols, underconstrained ordering can affect economic outcomes and create opportunities for MEV. A robust line of work has therefore sought to establish order-fairness by shifting ordering authority from a single leader or traversal rule to evidence derived from the collective observations of the network.

Because network-wide pairwise receive-order observations may form Condorcet-style cycles, practical systems often weaken exact global ordering goals into graph-based batch-order fairness. The graph used in these schemes is not the underlying consensus structure itself. Instead, systems construct an auxiliary dependency graph in which transactions are vertices and directed edges represent supermajority observations of relative arrival.

The practical realization of this model reveals a clear architectural evolution. Aequitas \cite{kelkar2020order} pioneered this graph-based foundation, circumventing Condorcet cycles by grouping strongly connected components (SCCs) into unordered batches, though it suffered from weak liveness because cyclic dependencies could indefinitely delay finalization. Themis \cite{kelkar2023themis} addressed this liveness limitation through deferred ordering and liveness anchors, resulting in a monolithic architecture that constructs $\mathcal{O}(|B|^2)$ transaction dependency graphs within the consensus path. SpeedyFair \cite{mu2024separation} later introduced a decoupled architecture, moving fair-ordering into an optimistic parallel pipeline to hide ordering latency from the core BFT agreement. Across these designs, fairness evidence is still collected and maintained as a transaction-level ordering structure distinct from the base protocol output.

When the traditional order-fairness paradigm intersects with modern DAG-BFT architectures, a recurring systems tension emerges. DAG-BFT protocols achieve high throughput precisely by decoupling transaction dissemination from the agreement path. Pushing transaction-level fair ordering into this setting, as in DoD \cite{nagda2025dag}, preserves the graph-based formulation but introduces explicit local-order, global-order, and finalization stages into the transaction-processing workflow. In addition to this computational mismatch, making transaction-level fairness claims checkable often relies on broad client dissemination \cite{kelkar2023themis, nagda2025dag}, so that fairness-relevant inputs are consistently observed across parties. This assumption creates tension with the core DAG-BFT design goal: it introduces a system-wide dissemination burden alongside the communication efficiency originally gained by decoupling dissemination from agreement.

These computational and communication costs expose a linearization gap. A DAG-BFT protocol already outputs a jointly authenticated committed sub-DAG: its vertices carry round numbers, creator identities, and authenticated ancestor relations, recording cross-creator structural visibility among committed units. This raises a systems opportunity: instead of carrying a separate ordering-evidence structure through dissemination, can the execution layer reuse the committed DAG itself as ordering evidence? MRV builds on this opportunity by interpreting committed-DAG visibility as a post-consensus signal for constructing a slice-local execution order without changing the underlying protocol.

\section{System Model and Problem Statement}
\label{sec:model}

We model MRV as a post-consensus interpretation layer over a DAG-based BFT stack. MRV does not modify the underlying consensus message flow, voting rules, or commit conditions. Instead, it acts as a deterministic bridge between consensus and execution: it operates on the growing committed DAG structure and the ordered committed execution slices exported by the base stack, and for each slice outputs a deterministic, slice-local order over its AUFs. The following subsections formalize these exported objects and the assumptions MRV makes about them.

\subsection{System Setting and Trust Assumptions}
We consider a system of $n$ replicas, denoted by $\mathcal{R}=\{1,\dots,n\}$, which also serve as the creators of DAG vertices. The base protocol authenticates creator identities and parent references through its certification and signature machinery, making committed DAG metadata publicly verifiable by all correct replicas. MRV is defined over an abstract committed-DAG exporter interface: it assumes only that the base stack eventually exposes a common sequence of finalized committed outputs to all correct replicas, under the synchrony and liveness assumptions of that base protocol instance. MRV itself introduces no additional synchrony requirement.

\subsection{Threat Model}
\label{sec:threat}

We inherit the standard authenticated Byzantine fault model used by DAG-based BFT protocols such as Narwhal/Tusk, DAG-Rider, and Bullshark~\cite{danezis2022narwhal,keidar2021all,spiegelman2022bullshark}. Among the $n$ replicas, at most $f$ are Byzantine, with $n\ge 3f+1$. The adversary is computationally bounded but may coordinate faulty replicas and schedule messages within the communication assumptions of the base protocol. Byzantine replicas may deviate during DAG construction by omitting vertices, delaying or withholding their own messages, choosing parent references strategically, or attempting equivocation.

MRV operates only on the canonical committed DAG exported by the base stack. We assume the base protocol exports certified vertices whose creator identities and parent references are authenticated, together with the same committed DAG prefix $\mathcal{G}^c_R$ and the same ordered slice sequence $(S_1,S_2,\ldots)$ to all correct replicas. As part of this exporter contract, for each creator-round pair $(c,t)$, the exporter exposes at most one canonical committed AUF $Y_{c,t}$; non-canonical equivocation attempts are not part of MRV's input and are excluded from creator-level visibility counts. This matches the standard interface of quorum-certified DAG-BFT stacks, where equivocation handling and canonicalization belong to the base protocol's validity, certification, and delivery rules rather than MRV's post-consensus interpretation layer.

At the MRV layer, Byzantine replicas and adversarial scheduling can affect the structural evidence that appears in $\mathcal{G}^c_R$, including visibility trajectories, maturity outcomes, and conflicting directional signals induced by strategic parent selection or message scheduling. MRV treats this committed structure as its evidence surface: evidence-backed precedence is derived only from the structural visibility predicate in Section~\ref{sec:fairness_target}, while pairs not supported by that predicate remain residual ambiguity for deterministic completion.

\subsection{Committed DAG and Execution Slices}
The base DAG-BFT protocol outputs a monotonically growing committed DAG view. Let $R$ denote the current committed frontier (e.g., the highest committed round). We define $\mathcal{G}^c_R = (V^c_R, E^c_R)$ as the common committed DAG view available at frontier $R$. All correct replicas eventually agree on the identical $\mathcal{G}^c_R$ for any given $R$.

For any committed vertex $X \in V^c_R$, the DAG provides authenticated structural metadata, including its creator $\mathsf{cr}(X) \in \mathcal{R}$, strictly increasing round number $r(X)$, parent references $\mathsf{Par}(X)$, and reflexive ancestor closure $\mathsf{Anc}(X)$ (by convention, $X \in \mathsf{Anc}(X)$).

We abstract the base DAG-BFT stack as exposing an agreed, ordered sequence of committed execution slices $S_1,S_2,\dots$. All correct replicas observe the same slice sequence in the same order. Each slice consists of the newly delivered AUFs associated with one commit decision. MRV does not require the base protocol to natively name such slices; it only requires an adapter that exposes the newly delivered AUFs for each commit decision before execution-layer traversal is applied. In a Narwhal/Tusk-style instantiation, this corresponds to the committed leader's causal history after removing AUFs that have already appeared in earlier slices. Because slices contain only newly delivered AUFs, they are mutually disjoint, so that each AUF belongs to exactly one slice. After removing previously delivered AUFs, a slice need not be causally closed by itself; MRV treats the slice strictly as the output domain, while the committed DAG prefix remains available as the evidence domain. MRV is invoked independently on each slice $S \subseteq V^c_R$. To derive the order for a given slice $S$, MRV need not wait for the full future committed DAG; it only consults the committed prefix available up to the frontier required by its bounded observation horizon.

\subsection{Atomic Units of Fairness and Structural Evidence}
MRV orders AUFs within a single committed execution slice. We abstract an AUF as the committed DAG vertex that carries the creator, round, and ancestry metadata used by MRV; in Narwhal/Tusk-style architectures, this naturally maps to a committed primary vertex. The order of transactions within an AUF and the order among committed slices are inherited from the underlying batching and consensus layers. Structural metadata are therefore attached to AUFs rather than to individual transactions. Let $X, A, B \in S$ denote AUFs within the same slice.

To evaluate structural precedence, MRV extracts cross-creator visibility from $\mathcal{G}^c_R$. Let $Y_{c,t}$ denote the canonical committed AUF exposed by the exporter for creator $c$ at round $t\le R$, if it exists. The structural visibility count of $X$ at round $t \ge r(X)$ is defined as:
$C_X(t) = \#\{c \in \mathcal{R} \mid Y_{c,t} \text{ exists and } X \in \mathsf{Anc}(Y_{c,t})\}.$ MRV limits post-commit evidence collection with a system-wide observation cap $W_{\max}$. For each AUF $X$, its stopping time $h_X$ is defined as: $h_X := \min \big( \{ t \ge r(X) \mid C_X(t) \ge 2f+1 \} \cup \{ r(X) + W_{\max} \} \big).$ The maturity indicator records whether $X$ reached quorum visibility within this cap: $\mathsf{mature}(X) = \mathtt{true} \iff C_X(h_X) \ge 2f+1. $

For pairwise alignment, MRV delays the evaluation of pair $(A, B)$ until both AUFs have reached their stopping times. We define the coexistence-alignment round $s(A,B) = \max(r(A), r(B))$, and the pair horizon $H(A,B) = \max(h_A, h_B)$. A slice $S$ reaches its final slice sealing time at $T(S) = \max_{X \in S} h_X$.

\subsection{Fairness Target: Structural Visibility Precedence}
\label{sec:fairness_target}

MRV defines an AUF-level structural ordering target over committed DAG outputs. Its evidence source is authenticated creator, round, and ancestry metadata in the committed DAG, not replica-local transaction arrival time. Accordingly, MRV provides a structural guarantee: it certifies an AUF-level precedence relation only when the committed DAG satisfies the structural visibility predicate, and abstains otherwise.

For two AUFs $A,B \in S$, define their visibility delta at round $t$ as $\Delta(A,B,t)=C_A(t)-C_B(t).$ The post-coexistence observation window for $(A,B)$ is $\mathcal{W}(A,B)=\{s(A,B)+1,\dots,H(A,B)\}.$ If $H(A,B)\le s(A,B)$, this window is empty and no post-coexistence structural precedence is inferred.

\begin{definition}[Structural Visibility Precedence]
For a pair $(A,B)$ in the same committed execution slice $S$, we say $A$ has structural visibility precedence over $B$, denoted $A \triangleright_{\mathrm{SVP}} B$, if (1) both $A$ and $B$ are mature; (2) there exists a round $t \in \mathcal{W}(A,B)$ such that $\Delta(A,B,t)\ge f+1$; and (3) there does not exist a round $t \in \mathcal{W}(A,B)$ such that $\Delta(B,A,t)\ge f+1$.
\end{definition}

This predicate is MRV's evidence-backed precedence relation. Threshold $f+1$ is applied to creator-level visibility deltas, so a positive signal must be reflected across distinct creators rather than amplified by repeated paths through the DAG.

\begin{definition}[Conditional AUF-Level Structural Fairness]
An output order for a committed slice satisfies conditional AUF-level structural fairness if it preserves every evidence-backed structural visibility precedence relation that remains compatible with the slice's hard causal order and the graph-level conflict resolution defined by MRV. Pairwise relations without mature, one-sided, non-conflicting structural visibility evidence remain residual ordering ambiguity and are completed deterministically without being counted as evidence-backed precedence.
\end{definition}

Section~\ref{sec:design} defines enforceability via the MRV precedence graph, where causal and SVP edges are assembled and only SVP constraints that survive SCC condensation are preserved as structural precedence claims.

This target follows the systems principle behind order-fairness: execution order should be constrained by verifiable evidence rather than proposer discretion or arbitrary tie-breaking. MRV applies this principle to a DAG-native evidence substrate, using committed creator-level visibility to constrain slice-local AUF linearization.

\begin{figure*}[t]
  \centering
  \includegraphics[width=0.9\textwidth]{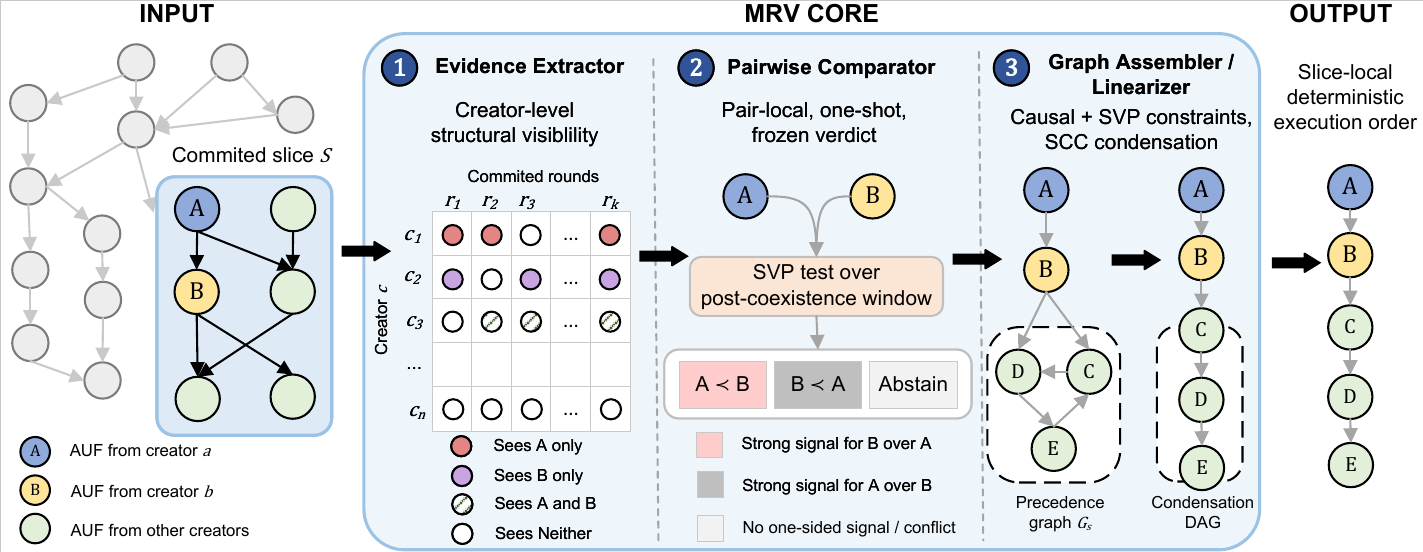}
    \caption{Execution pipeline of MRV. Operating post-consensus, MRV takes a committed execution slice $S$ as read-only input and derives a deterministic slice-local order through three local stages. (1) The Evidence Extractor accumulates creator-level structural visibility for each AUF over bounded committed rounds until a quorum threshold or observation cap is reached. (2) The Pairwise Comparator evaluates mature AUF pairs over a post-coexistence window and freezes a verdict only when the committed DAG provides unconflicted, one-sided strong evidence. (3) The Graph Assembler builds the precedence graph $G_S$ from hard causal constraints and evidence-backed verdicts, then condenses SCCs, topologically orders the DAG, and completes residual ties deterministically.}
  \label{fig:mrv-overview}
\end{figure*}

\subsection{Problem Statement}
\label{sec:problem}

For each committed execution slice $S$, MRV derives a deterministic slice-local total order $\prec_S$ over the AUFs in $S$. Its fairness target is the conditional AUF-level structural fairness defined in Section~\ref{sec:fairness_target}: when the committed DAG provides mature, one-sided structural visibility evidence for a same-slice AUF pair, MRV should preserve the corresponding enforceable precedence relation in the final slice order. Pairs without such support remain residual ordering ambiguity; MRV completes them deterministically while keeping the resulting order separate from evidence-backed precedence.

\textbf{Slice Boundary.}
MRV operates within the linearization boundary of a committed execution slice. Cross-slice reactive strategies, such as placing a transaction in $S_{i+1}$ in response to one committed in $S_i$, exercise a different control point: transaction admission, mempool scheduling, or the base commit sequence. MRV is complementary to these mechanisms and targets the intra-slice ordering ambiguity exposed by concurrent DAG commitment.

Given a monotonically growing committed DAG view $\mathcal{G}^c_R$, a committed execution slice $S$, and a common observation cap $W_{\max}$, MRV produces a slice-local total order satisfying:

\noindent\textbf{(1) Deterministic Agreement:} All correct replicas derive the same $\prec_S$ from the same committed inputs.\\
\noindent\textbf{(2) Causal Consistency:} $\prec_S$ extends the committed DAG's causal partial order within $S$.\\
\noindent\textbf{(3) Slice Locality:} MRV orders only AUFs in $S$, while its evidence may use committed descendants outside $S$ within the bounded observation horizon.\\
\noindent\textbf{(4) Evidence Conservatism:} MRV certifies an evidence-backed precedence constraint only from mature, one-sided, non-conflicting structural visibility evidence.\\
\noindent\textbf{(5) Conditional Structural Fairness:} Every enforceable SVP relation from Section~\ref{sec:fairness_target} is preserved in $\prec_S$.\\
\noindent\textbf{(6) Bounded Completion:} $\prec_S$ is sealed once the committed frontier reaches $T(S)=\max_{X\in S}h_X$, so post-commit evidence collection is bounded by $W_{\max}$.

\section{MRV Design and Workflow}
\label{sec:design}

\subsection{Overview and Execution Pipeline}
MRV is a deterministic post-consensus interpretation pipeline executed independently by all correct replicas. It takes as input the monotonically growing committed DAG view $\mathcal{G}^c_R$, the current committed execution slice $S$, and the system-wide observation cap $W_{\max}$, and produces a deterministic, slice-local total order $\prec_S$ over the AUFs in $S$. An overview of the MRV architecture is shown in Fig.~\ref{fig:mrv-overview}.

MRV separates the ordering domain from the evidence domain. The ordering domain is restricted to the AUFs contained in the current slice $S$. The evidence domain may extend beyond $S$: MRV may consult committed descendants outside $S$ from later rounds to extract structural visibility evidence, provided that they lie within the bounded observation horizon. MRV is invoked independently for each committed execution slice; cross-slice delivery order remains the order exported by the base consensus layer.

MRV realizes the ordering of each slice through five logical stages. As the committed frontier advances, these stages are incrementally interleaved in an online manner:

\noindent\textbf{1. Structural Visibility Accumulation:} As new vertices are committed, MRV incrementally tracks the creator-level visibility of each AUF $X \in S$ across subsequent rounds.
\\
\noindent\textbf{2. AUF Stopping Times:} Each AUF reaches a stopping time $h_X$ either by achieving quorum-grade visibility or by hitting the common observation cap $W_{\max}$.
\\
\noindent\textbf{3. Pair Horizons and Frozen Verdicts:} For each pair $(A, B) \in S$, MRV waits until the pair horizon $H(A,B)$ is reached, extracts pairwise precedence evidence from post-coexistence rounds, and freezes the resulting verdict.
\\
\noindent\textbf{4. Slice Sealing and Precedence Graph Construction:} Once the committed frontier reaches $T(S)$, all AUFs in $S$ have reached their stopping times. MRV then constructs a slice-local precedence graph from hard causal constraints and evidence-backed pairwise verdicts.
\\
\noindent\textbf{5. Graph Linearization and Deterministic Completion:} MRV compresses cycles through SCC condensation, topologically orders the condensation DAG, and applies the deterministic completion key $\kappa$ only to residual ambiguity not fixed by causal or enforceable evidence-backed constraints.

MRV first determines which AUFs have accumulated structural evidence, then identifies which AUF pairs admit evidence-backed precedence, and finally turns the resulting constrained partial order into an executable total order. Deterministic completion is used only for residual ambiguity; it does not create additional structural precedence claims.

\subsection{Structural Visibility, Observation Horizon, and AUF Maturity}
Building on the structural visibility count $C_X(t)$ defined in Section~\ref{sec:model}, MRV uses committed descendants to incrementally accumulate ordering evidence for each AUF over rounds.

\textbf{Creator-Level Visibility Extraction.} For an AUF $X \in S$ and round $t \ge r(X)$, visibility is measured by $C_X(t) = \#\{c \in \mathcal{R} \mid Y_{c,t} \text{ exists and } X \in \mathsf{Anc}(Y_{c,t})\}.$ This quantity counts the number of distinct creators whose canonical committed AUFs include $X$ in their ancestor closure. Because the metric is creator-level and ancestry-based, it does not depend on replica-local receive order.

\textbf{Observation Cap and Stopping Time.}
MRV treats an AUF as having accumulated quorum-grade structural visibility once $C_X(t) \ge 2f+1$. Since at most $f$ creators are Byzantine, this means that the AUF appears in the authenticated ancestry of at least $f+1$ correct creators at round $t$; this is a structural visibility witness, not a claim about transaction-level arrival order. To avoid waiting indefinitely for delayed AUFs, MRV bounds evidence accumulation by the observation cap $W_{\max}$. The cap is a protocol configuration parameter that controls the trade-off between evidence coverage and post-commit ordering delay. A larger $W_{\max}$ gives MRV more committed rounds from which to observe structural visibility, while a smaller $W_{\max}$ closes evidence windows earlier and may increase immature AUFs, abstentions, and deterministic completion. In our prototype, this cap is instantiated using the configured committed-history retention window passed to the MRV executor.

Using the stopping time already defined in Section~\ref{sec:model},
\[
h_X := \min \big( \{ t \ge r(X) \mid C_X(t) \ge 2f+1 \} \cup \{ r(X) + W_{\max} \} \big),
\]
MRV replaces an unbounded hindsight window with a bounded online stopping rule driven entirely by committed-DAG structure. As illustrated in Figure~\ref{fig:mrv-visibility}, for each AUF in the active slice, the evidence window closes deterministically at the first committed round where the AUF either reaches the quorum visibility threshold or reaches the observation cap.

\textbf{Maturity Semantics.} Based on $h_X$, the maturity indicator $\mathsf{mature}(X)$ is true iff $C_X(h_X) \ge 2f+1$. An AUF capped before reaching this threshold is immature. Only mature AUFs are eligible to induce positive precedence claims in pairwise comparison. Immature AUFs still appear in the final total order, but they induce no evidence-backed precedence constraints.

\begin{figure}[htbp]
    \centering
    \includegraphics[width=\columnwidth]{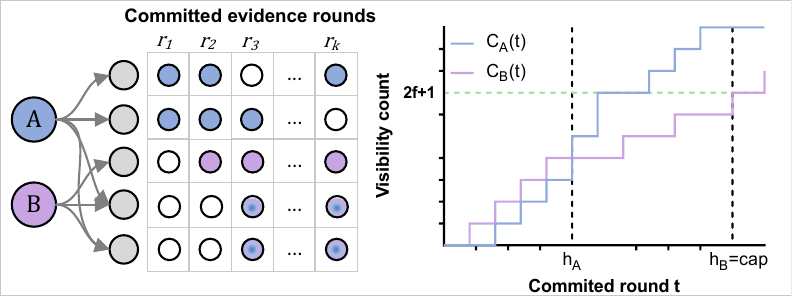}
    \caption{Structural visibility accumulation and AUF stopping times. Matrix cells indicate whether creator-round AUFs see $A$, $B$, both, or neither. MRV counts distinct creators that see each target AUF and fixes $h_X$ at the $2f+1$ maturity threshold or the observation cap $r(X)+W_{\max}$.}
    \label{fig:mrv-visibility}
\end{figure}

\subsection{Pairwise Verdicts from Post-Coexistence Evidence}
Given the stopping time $h_X$ and maturity indicator $\mathsf{mature}(X)$ of each AUF, MRV converts per-AUF visibility trajectories into pairwise precedence verdicts. Each pair is frozen once its pair horizon is reached, so later committed rounds cannot rewrite an already-issued pairwise claim. This pairwise comparator is the only stage at which MRV introduces explicit precedence claims; all later stages merely assemble or complete those claims deterministically.

\textbf{Mature-Pair Gate.} For any pair $(A, B) \in S$, if either $\mathsf{mature}(A)$ or $\mathsf{mature}(B)$ evaluates to $\mathtt{false}$, MRV abstains. Immature AUFs provide truncated evidence trajectories and therefore do not support explicit positive precedence claims.

\textbf{Post-Coexistence Alignment.} For mature pairs, MRV compares visibility only after both AUFs coexist in the committed history. With coexistence-alignment round $s(A,B) = \max(r(A), r(B))$, MRV evaluates the visibility delta over the strictly post-coexistence window $t \in \{s(A,B)+1, \dots, H(A,B)\}$ as $\Delta(A,B,t) = C_A(t) - C_B(t)$.
This alignment excludes pre-coexistence lead from being interpreted as structural precedence evidence. The window starts at $s(A,B)+1$ because $s(A,B)$ is the later creation round of the pair; only subsequent committed rounds can provide creator-level visibility observations that may reflect both AUFs.

\textbf{One-Sided Strong Signal Rule.} MRV implements the SVP predicate from Section~\ref{sec:fairness_target} using a visibility-delta threshold of $f+1$. Since at most $f$ creators are Byzantine and each creator contributes at most one visibility count per round, a visibility advantage of $f+1$ cannot be generated solely by Byzantine creators. This does not imply transaction-level arrival priority; it identifies a Byzantine-robust structural signal in the committed DAG. Let $pos(A,B)$ and $neg(A,B)$ indicate whether the post-coexistence window contains a positive or negative strong signal, respectively. MRV returns $\mathtt{Edge}(A\to B)$ iff $pos(A,B)$ is true and $neg(A,B)$ is false. If both signals appear, the DAG exhibits conflicting evidence and MRV abstains; if neither appears, MRV abstains for lack of signal. MRV relies on the existence of an unopposed strong signal rather than an accumulated score, preventing later descendants from disproportionately amplifying an early advantage. Figure~\ref{fig:mrv-pairwise} illustrates this rule over the post-coexistence visibility-delta trajectory.

\textbf{Verdict Taxonomy.} Based on these rules, Algorithm~\ref{alg:pair_verdict} evaluates each pair exactly once at $R \ge H(A,B)$, yielding one of four verdict categories (where $\mathtt{Edge}$ may be instantiated as either $\mathtt{Edge}(A \to B)$ or $\mathtt{Edge}(B \to A)$): 
\textbf{Edge}, if a one-sided strong signal is observed; 
\textbf{Abstain-Truncated}, if at least one AUF fails to mature; 
\textbf{Abstain-Conflict}, if both positive and negative strong signals are observed; and 
\textbf{Abstain-NoSignal}, if no strong directional signal is observed.

Abstention is MRV's certification boundary. An abstained pair is left outside the evidence-backed precedence relation and remains residual ordering ambiguity in $G_S$; deterministic completion resolves this ambiguity only to obtain an executable total order. This separation lets MRV distinguish precedence supported by committed-DAG evidence from ordering choices made solely for deterministic executability. As a result, abstention reduces evidence coverage rather than evidence soundness: it may increase the portion of the final order determined by completion, but it does not turn unsupported evidence into a structural precedence claim.

\begin{figure}[htbp]
    \centering
    \includegraphics[width=.9\columnwidth]{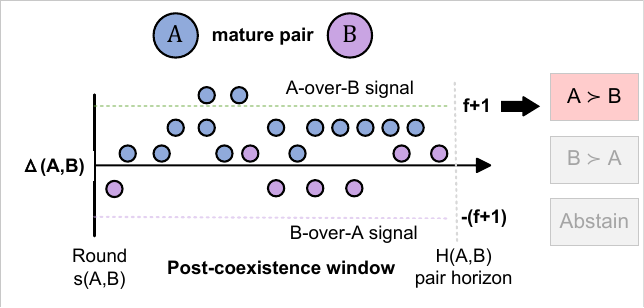}
    \caption{Pairwise comparison and verdict extraction. For a mature pair $(A,B)$, MRV evaluates visibility deltas only after both AUFs coexist. An unopposed $f+1$ crossing yields $A\to B$; conflict or no strong signal yields abstention.}
    \label{fig:mrv-pairwise}
\end{figure}

\begin{algorithm}
\caption{MRV Pairwise Verdict for $(A,B)$}
\label{alg:pair_verdict}
\begin{algorithmic}[1]
\Require Pair $(A,B)$, stopping times, maturity flags, and visibility values up to $H(A,B)$
\If{$\neg\mathsf{mature}(A) \lor \neg\mathsf{mature}(B)$}
    \State \Return $\mathtt{Abstain\text{-}Truncated}$
\EndIf
\State $s \gets \max(r(A),r(B))$, $H \gets \max(h_A,h_B)$
\State $pos \gets \mathsf{false}$, $neg \gets \mathsf{false}$
\For{$t=s+1$ \textbf{to} $H$}
    \State $\Delta \gets C_A(t)-C_B(t)$
    \State $pos \gets pos \lor (\Delta \ge f+1)$
    \State $neg \gets neg \lor (\Delta \le -(f+1))$
\EndFor
\If{$pos \land \neg neg$}
    \State \Return $\mathtt{Edge}(A \to B)$
\ElsIf{$neg \land \neg pos$}
    \State \Return $\mathtt{Edge}(B \to A)$
\ElsIf{$pos \land neg$}
    \State \Return $\mathtt{Abstain\text{-}Conflict}$
\Else
    \State \Return $\mathtt{Abstain\text{-}NoSignal}$
\EndIf
\end{algorithmic}
\end{algorithm}

\subsection{Slice Sealing, Precedence Graph Construction, and Linearization}
While Algorithm~\ref{alg:pair_verdict} establishes pairwise constraints, MRV assembles these local verdicts into an executable total order.

\textbf{Slice Sealing.} MRV defines the slice sealing time for a slice $S$ as $T(S) = \max_{X \in S} h_X.$ Since the pair horizon is $H(A,B) = \max(h_A, h_B)$, it follows that $H(A,B) \le T(S)$ for every pair $(A,B) \in S \times S$. Therefore, once the committed frontier reaches $R \ge T(S)$, every AUF in $S$ has reached its stopping time and every pairwise verdict is frozen.

\textbf{Precedence Graph Construction.} At this sealing horizon, MRV constructs the slice-local precedence graph $G_S=(S,E_S)$ with two classes of edges. The first class consists of hard causal constraints: $E_S^{\mathsf{causal}} = \{(B,A) \mid A,B\in S \land B\in\mathsf{Anc}(A)\}.$ These edges require the final slice order to extend the committed DAG's causal partial order. The second class consists of evidence-backed structural precedence edges: $E_S^{\mathsf{svp}} = \{(A,B) \mid \text{the comparator returns } \mathtt{Edge}(A\to B)\}.$ MRV sets $E_S=E_S^{\mathsf{causal}}\cup E_S^{\mathsf{svp}}$. Any abstention yields no evidence-backed edge in $E_S^{\mathsf{svp}}$.

\textbf{Graph Linearization and Deterministic Completion.} Because pairwise constraints are derived independently from local structural visibility, they need not be perfectly transitive. Independently justified edges may therefore form precedence cycles, making a graph-level assembly phase necessary. MRV identifies SCCs in $G_S$ and condenses them into a Directed Acyclic Graph (DAG) of SCCs, which is then linearized by deterministic topological ordering. 

Causal constraints represent the committed DAG's partial order and are not treated as fairness claims. MRV's structural fairness claims are derived solely from evidence-backed comparator edges in $E_S^{\mathsf{svp}}$. If evidence-backed edges create cycles, MRV treats the affected evidence as graph-level ambiguity: only constraints that survive SCC condensation are claimed as enforceable structural precedence. Causal constraints remain hard execution constraints throughout linearization. Within each SCC, MRV topologically orders the subgraph induced by $E_S^{\mathsf{causal}}$ and applies the deterministic key $\kappa$ only among causally incomparable AUFs. Therefore, deterministic completion never violates causality or turns residual ordering choices into structural fairness claims.

To make completion replica-consistent, MRV introduces an abstract deterministic key $\kappa(X)$ that induces the same total order over AUFs at all correct replicas. The key is used in two places. First, for an SCC $C$, MRV lifts the key as $\kappa(C)=\min_{X\in C}\kappa(X)$ and uses it to break ties among simultaneously eligible SCCs in the condensation DAG. Second, within an SCC, $\kappa(X)$ is used only after the causal subgraph is enforced, as described above. In our prototype, $\kappa(X)$ is the lexicographic tuple $(r(X),\mathsf{cr}(X),\mathsf{dig}(X))$, where $\mathsf{dig}(X)$ denotes the canonical digest of $X$. After SCC condensation, topological ordering, and deterministic completion, the final sequence $\prec_S$ is delivered to the execution engine.

\subsection{Online Replica Procedure and State Management}
MRV is implemented as an incremental online procedure driven by the advancing committed frontier. This procedure demonstrates that MRV avoids offline hindsight reconstruction; instead, it updates ordering evidence incrementally as the base protocol advances the committed frontier. Rather than deferring all computation until a slice seals, each replica maintains bounded active state for every unsealed slice and updates that state whenever a new committed round becomes available.

For each active AUF $X$, MRV stores (i) the per-round structural visibility profile $\{C_X(t)\}$ over the rounds observed so far within the relevant horizon, (ii) the stopping time $h_X$ if already determined, and (iii) the maturity flag $\mathsf{mature}(X)$. For each active pair $(A,B)$, MRV stores whether the pairwise verdict has already been frozen and, if so, the frozen outcome returned by Algorithm~\ref{alg:pair_verdict}. For each active slice $S$, MRV stores the member AUFs together with the intermediate data needed to construct $G_S$ once the slice seals.

When a new committed frontier round $R$ is delivered, the replica first materializes the canonical committed AUFs $\{Y_{c,R}\}_c$, if any. For every active AUF $X$, MRV initializes the current-round visibility count $C_X(R)$ to zero. It then performs a bounded backward traversal from each canonical AUF $Y_{c,R}$ through its committed ancestors and increments $C_X(R)$ for every active AUF $X$ encountered. This traversal is restricted to active AUFs within the relevant bounded horizon, rather than the entire historical DAG. This yields the creator-level structural visibility counts for round $R$ without relying on local receive order or external timing signals.

These round-$R$ counts are then used to settle AUF stopping times. If $h_X$ is still undefined and $C_X(R)\ge 2f+1$, MRV fixes $h_X := R$ and marks $X$ as mature. If $h_X$ remains undefined and $R \ge r(X)+W_{\max}$, MRV caps $X$ at $h_X := r(X)+W_{\max}$ and marks it as immature. Fixing $h_X$ does not terminate visibility tracking for $X$: MRV continues to record later $C_X(t)$ values until every pair involving $X$ has frozen, because pairwise comparison may extend to a later horizon $H(A,B)$.

Once both $h_A$ and $h_B$ are known, the pair horizon $H(A,B)$ becomes known as well. Whenever the frontier satisfies $R \ge H(A,B)$, the pair $(A,B)$ is evaluated exactly once using Algorithm~\ref{alg:pair_verdict}, and its verdict is frozen permanently.

Finally, when $R \ge T(S)$ for active slice $S$, every AUF in $S$ has reached its stopping time and every pairwise verdict is frozen. MRV then constructs the precedence graph $G_S$, performs SCC condensation and deterministic topological linearization, applies the completion key $\kappa$ to resolve residual ambiguity, outputs the final order $\prec_S$, and releases all state for $S$. Because evidence accumulation is bounded by $W_{\max}$ and each pair is frozen at most once, the online procedure maintains bounded active state and avoids unbounded retrospective dependence on the committed DAG.

\begin{algorithm}
\caption{MRV Processing at Committed Frontier $R$}
\label{alg:online_exec}
\begin{algorithmic}[1]
\Require New committed frontier round $R$, canonical AUFs $\{Y_{c,R}\}$, active slices, and MRV state
\State Initialize $C_X(R)\gets 0$ for each active AUF $X$
\For{each canonical AUF $Y_{c,R}$}
    \For{each active AUF $X \in \mathsf{Anc}(Y_{c,R})$ reached by bounded traversal}
        \State $C_X(R) \gets C_X(R)+1$
    \EndFor
\EndFor
\For{each active AUF $X$ with $h_X$ undefined}
    \If{$C_X(R)\ge 2f+1$}
        \State $h_X \gets R$; $\mathsf{mature}(X)\gets \mathtt{true}$
    \ElsIf{$R\ge r(X)+W_{\max}$}
        \State $h_X \gets r(X)+W_{\max}$; $\mathsf{mature}(X)\gets \mathtt{false}$
    \EndIf
\EndFor
\For{each unfrozen pair $(A,B)$ with $h_A,h_B$ defined and $R\ge H(A,B)$}
    \State Freeze $\mathsf{verdict}(A,B)\gets \Call{PairwiseVerdict}{A,B}$
\EndFor
\For{each active slice $S$ with all $h_X$ defined and $R\ge T(S)$}
    \State Build $G_S$ from causal edges and frozen $\mathtt{Edge}$ verdicts
    \State Condense SCCs and topologically order the SCC-DAG using $\kappa(C)$
    \State Order each SCC by causal constraints, then $\kappa(X)$
    \State Output $\prec_S$ and release state for $S$
\EndFor
\end{algorithmic}
\end{algorithm}

\subsection{Adversarial Influence and Conservative Failure Mode}
\label{sec:adversarial_influence}

MRV treats Byzantine behavior as part of the committed-DAG evidence surface. Faulty creators may influence structural visibility through their own parent choices, omissions, equivocations handled by the base exporter, or message scheduling effects permitted by the base protocol. However, MRV counts visibility at creator granularity, and each creator contributes at most one canonical visibility observation per round. Therefore, Byzantine-only contributions can account for at most $f$ units of a directional visibility margin in any round; an $f+1$ margin necessarily extends beyond Byzantine creator contributions. The effect of adversarial behavior is captured by MRV's verdict taxonomy. Suppressed or delayed visibility may prevent maturity and lead to $\mathtt{Abstain\text{-}Truncated}$; opposing strong visibility patterns lead to $\mathtt{Abstain\text{-}Conflict}$; weak or insufficient directional evidence leads to $\mathtt{Abstain\text{-}NoSignal}$. In each case, the pair remains outside $E_S^{\mathsf{svp}}$ and is resolved only as residual ambiguity during deterministic completion. Thus, adversarial influence can reduce MRV's evidence coverage, but it does not change the rule by which MRV certifies structural precedence edges.

\subsection{Complexity, State Bound, and Practical Cost Drivers}
\label{sec:complexity}

Algorithm~\ref{alg:online_exec} adds bounded local overhead above the base DAG-BFT stack. MRV introduces no consensus-path messages and no extra cryptographic verification; its cost comes from bounded backward traversal for visibility updates, pairwise verdict evaluation over AUFs in a slice, and SCC condensation/topological linearization once the slice seals.

For a slice of size $|S|$, a direct implementation performs $O(|S|^2W_{\max})$ pairwise verdict work and $O(|S|^2)$ slice-local graph assembly in the worst case. The quadratic factor is over AUFs, not individual transactions; transaction ordering inherits the AUF order through containment. Active per-slice state is bounded by visibility profiles over at most $W_{\max}$ rounds together with $O(|S|^2)$ pair verdict state, and all state for $S$ is released once $R\ge T(S)$. Thus, MRV bounds the additional committed-round evidence horizon by $W_{\max}$; progress of the committed frontier remains the responsibility of the base protocol. A fuller cost breakdown is provided in the supplementary material.


\subsection{Analysis and Guarantees}
\label{sec:analysis}

We analyze MRV as a deterministic post-consensus interpretation layer over the exporter interface in Section~\ref{sec:model}. The base DAG-BFT stack provides a common committed DAG prefix and a common ordered sequence of committed execution slices; MRV deterministically maps these exported objects to slice-local execution orders. We summarize the main guarantees below and provide full proofs in the supplementary material.


\begin{theorem}[Deterministic Agreement]
\label{thm:deterministic_agreement}
For any committed execution slice $S$, if two correct replicas observe the same exported committed prefix up to $R\ge T(S)$, the same $W_{\max}$, and the same completion key $\kappa$, then they output the same total order $\prec_S$.
\end{theorem}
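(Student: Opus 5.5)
The plan is to show that $\prec_S$ is a composition of deterministic functions, each depending only on the exported committed prefix up to $T(S)$, on $W_{\max}$, on $f$ (fixed by the system), and on $\kappa$. We then argue stage by stage, following the pipeline of Section~\ref{sec:design}, that two correct replicas with identical inputs compute identical intermediate objects.

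\textbf{Step 1 (visibility counts).} For every $X\in S$ and every $t\le R$, the count $C_X(t)$ is defined purely from the canonical AUFs $Y_{c,t}$ and the relation $X\in\mathsf{Anc}(Y_{c,t})$. By the exporter contract in Section~\ref{sec:threat}, there is at most one canonical $Y_{c,t}$ per creator--round pair. Parent references are authenticated, so $\mathsf{Anc}$ is a function of the common prefix $\mathcal{G}^c_R$, and both replicas compute the same $C_X(t)$.

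One point needs care. The online Algorithm~\ref{alg:online_exec} uses a \emph{bounded} traversal restricted to active AUFs. We must argue that this traversal returns exactly the set $\{X\in\mathsf{Anc}(Y_{c,R})\}$ for the active $X$ that matter. Any ancestor path from $Y_{c,R}$ to an active $X$ passes only through rounds between $r(X)$ and $R$, so truncating below the oldest active round loses nothing. Hence the online counts coincide with the definitional ones.

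\textbf{Step 2 (stopping times, maturity, verdicts).} The stopping time $h_X$ is a minimum over a set determined by $\{C_X(t)\}_{t\le r(X)+W_{\max}}$ and $W_{\max}$. Every $h_X\le T(S)\le R$, so all relevant rounds lie inside the shared prefix, and $h_X$ and $\mathsf{mature}(X)$ agree across the two replicas. Consequently $s(A,B)$, $H(A,B)\le T(S)$, and $\mathcal{W}(A,B)$ also agree.

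Algorithm~\ref{alg:pair_verdict} is a deterministic function of these quantities and of $C_A(t), C_B(t)$ for $t\le H(A,B)$. So the frozen verdicts agree. The verdict for a pair is frozen once, at the first frontier $R'\ge H(A,B)$, and uses only rounds up to $H(A,B)$. Therefore the replica's position along the monotone prefix at that moment is irrelevant, and later rounds cannot change the verdict. This shows that incremental arrival does not introduce divergence.

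\textbf{Step 3 (graph and linearization).} Both $E_S^{\mathsf{causal}}$ (from $\mathsf{Anc}$ restricted to $S$) and $E_S^{\mathsf{svp}}$ (from the verdicts) agree, so the graphs $G_S$ are identical. The SCC partition of a graph is unique, and therefore so is the condensation.

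The main obstacle is that topological orders and within-SCC orders are generally non-unique. Determinism must therefore come from the tie-breaking rule, and we must check that the rule always yields a unique choice. I would formalize linearization as Kahn's algorithm that repeatedly emits the eligible SCC with minimum $\kappa(C)=\min_{X\in C}\kappa(X)$. Because $\kappa$ is injective (it includes $\mathsf{dig}(X)$) and SCCs are disjoint, the $\kappa(C)$ values are distinct, so the minimum is unique.

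The same argument applies within each SCC. We run Kahn's algorithm on the causal subgraph, which is acyclic since the committed DAG is acyclic, choosing the minimum-$\kappa$ eligible AUF at each step. An induction on the number of emitted elements then shows that both replicas produce the same sequence, and hence the same $\prec_S$.
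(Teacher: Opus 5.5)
Your proposal is correct and follows the same route as the paper's proof sketch: every intermediate object (visibility counts, stopping times, maturity flags, pair horizons, frozen verdicts, and the edge sets of $G_S$) is a deterministic function of the common committed prefix, $W_{\max}$ and $f$, and SCC condensation followed by $\kappa$-driven topological ordering is deterministic. Your extra checks make explicit what the sketch only asserts: that the bounded traversal loses no ancestry paths, that frozen verdicts do not depend on when the frontier crosses $H(A,B)$, and that the $\kappa$-minima are unique (which follows from the paper's assumption that $\kappa$ induces a total order, rather than from the digest).
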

\noindent\emph{Proof sketch.}
Visibility counts, stopping times, maturity flags, pair horizons, frozen verdicts, and graph edges are deterministic functions of the common committed prefix. SCC condensation, topological ordering, and completion by $\kappa$ are deterministic, so replicas output the same $\prec_S$.

\begin{theorem}[Bounded Completion]
\label{thm:bounded_completion}
For any committed execution slice $S$, if the base stack eventually exports a committed frontier $R\ge T(S)$, then MRV eventually seals $\prec_S$. Moreover, for every $X\in S$,
\[
h_X\le r(X)+W_{\max}, \qquad
T(S)\le \max_{X\in S}(r(X)+W_{\max}).
\]
\end{theorem}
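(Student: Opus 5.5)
The plan is to separate the two claims of Theorem~\ref{thm:bounded_completion}: first the numerical bounds, which follow directly from the definition of the stopping time, and then eventual sealing, which follows from how Algorithm~\ref{alg:online_exec} reacts to frontier advances.

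\textbf{Step 1 (the bounds).} For each $X\in S$, the stopping time $h_X$ is the minimum of a set that always contains $r(X)+W_{\max}$. The set is therefore nonempty, $h_X$ is well defined, and $h_X\le r(X)+W_{\max}$. Taking the maximum over the finite slice $S$, whose members are distinct canonical AUFs of a finite committed prefix, gives
\[
T(S)=\max_{X\in S}h_X\le \max_{X\in S}\bigl(r(X)+W_{\max}\bigr).
\]

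\textbf{Step 2 (the online procedure matches the definition).} I will show by induction on the delivered frontier rounds $R'=r(X),r(X)+1,\dots$ that the $h_X$ set by Algorithm~\ref{alg:online_exec} equals the $h_X$ of the definition.

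\emph{Maturity branch.} At round $R'$, the algorithm sets $h_X=R'$ exactly when $C_X(R')\ge 2f+1$ and no earlier round triggered. This is the first threshold crossing, so it agrees with the minimum in the definition. Because the counts at each round are computed by traversal from the canonical $Y_{c,R'}$, they equal $C_X(R')$.

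\emph{Cap branch.} If no crossing occurs by $r(X)+W_{\max}$, then when $R'=r(X)+W_{\max}$ is processed the cap branch fires and sets $h_X=r(X)+W_{\max}$.

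Here I assume each committed round up to the exported frontier is processed once, in order. This is an exporter-contract assumption and should be stated explicitly. Under it, by the time the frontier reaches $R\ge T(S)$, every $X\in S$ has $h_X$ defined.

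\textbf{Step 3 (pairs freeze and the slice seals).} For every pair $(A,B)$ in $S$ we have $H(A,B)=\max(h_A,h_B)\le T(S)\le R$. Each such pair is therefore frozen at the latest in the iteration handling $R$, since the pair loop runs before the slice loop in the same invocation. The slice condition "all $h_X$ defined and $R\ge T(S)$" then holds, so $G_S$ is built.

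SCC condensation, topological ordering of a finite DAG, and $\kappa$-completion inside each SCC all terminate. Inside an SCC the subgraph induced by $E_S^{\mathsf{causal}}$ is acyclic, because ancestry in the committed DAG is acyclic given strictly increasing rounds, so a topological order exists. Hence $\prec_S$ is output.

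\textbf{Main obstacle.} The delicate point is Step 2. It needs every intermediate round to be delivered to MRV, or else a justification that skipped rounds have no canonical $Y_{c,t}$ and hence $C_X(t)=0$. It also needs visibility tracking for $X$ to stay active until $H(A,B)$, so that the pair verdicts can be evaluated. I would handle both by appealing to the exporter contract in Section~\ref{sec:model} and the state-retention rule stated in the online procedure.
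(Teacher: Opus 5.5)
Your proposal is correct and is essentially the paper's argument: the bounds hold because the cap round $r(X)+W_{\max}$ always lies in the set whose minimum defines $h_X$. Sealing follows because once $R\ge T(S)$ every pair horizon $H(A,B)\le T(S)$ is covered, leaving only finite, terminating SCC condensation, topological ordering, and $\kappa$-completion; your Step~2 (that Algorithm~\ref{alg:online_exec} recovers the definitional $h_X$ when committed rounds are processed in order) and your check that the causal subgraph inside each SCC admits a topological order make explicit what the paper's sketch takes for granted.
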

\noindent\emph{Proof sketch.}
The cap round $r(X)+W_{\max}$ is always included in the definition of $h_X$, so every stopping time is bounded. Once $R\ge T(S)$, all pair horizons are covered and MRV performs only finite local graph construction and deterministic completion.

\begin{theorem}[Slice Locality and Causal Consistency]
\label{thm:slice_locality_causal_consistency}
MRV outputs a total order only over AUFs in $S$. If $A,B\in S$ and $B\in\mathsf{Anc}(A)$, then $B\prec_S A$.
\end{theorem}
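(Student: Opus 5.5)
The theorem has two parts, and I would handle them separately by unfolding the construction in the subsection on slice sealing and linearization.

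\emph{Slice locality.} The precedence graph is defined as $G_S=(S,E_S)$, so its vertex set is exactly $S$. Both edge classes are restricted to pairs in $S\times S$: $E_S^{\mathsf{causal}}$ requires $A,B\in S$, and $E_S^{\mathsf{svp}}$ contains only comparator outputs for same-slice pairs. SCC condensation partitions $S$. Topological ordering of the condensation DAG and the within-SCC ordering only permute these components and their members. Hence $\prec_S$ is a total order whose domain is exactly $S$. Committed descendants outside $S$ enter only through the counts $C_X(t)$, that is, as evidence. They never become vertices of $G_S$.

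\emph{Causal consistency.} Fix $A,B\in S$ with $B\in\mathsf{Anc}(A)$. By definition, $(B,A)\in E_S^{\mathsf{causal}}\subseteq E_S$. I would first note that $\mathsf{Anc}$ is reflexive, so the case $A=B$ must be excluded, and in any case a total order is irreflexive. For $A\neq B$, round numbers strictly increase along parent references, so $r(B)<r(A)$ and the causal relation restricted to $S$ is acyclic. I then split into two cases.
\begin{enumerate}
\item $A$ and $B$ lie in different SCCs $C_B\neq C_A$. The edge $(B,A)$ induces the edge $(C_B,C_A)$ in the condensation DAG. Any topological order, including the $\kappa(C)$ tie-broken one, places $C_B$ before $C_A$. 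Since the final sequence concatenates SCCs in topological order, $B\prec_S A$.
\item $A$ and $B$ lie in the same SCC $C$. The procedure topologically orders the subgraph of $C$ induced by $E_S^{\mathsf{causal}}$, and applies $\kappa$ only among causally incomparable AUFs. The edge $(B,A)$ belongs to this induced subgraph, so $B$ precedes $A$ in the within-$C$ order.
\end{enumerate}

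\emph{Expected obstacle.} The delicate point is the second case. I must check that the causal subgraph induced on an SCC is acyclic, so that a topological order exists. This follows from the strict round monotonicity noted above, since any nonempty causal cycle would give $r(X)<r(X)$.

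A second subtlety is that slices need not be causally closed, and ancestry between two members of $S$ may pass through AUFs outside $S$. This does not break the argument. $E_S^{\mathsf{causal}}$ is defined directly from $\mathsf{Anc}$ on the full committed DAG rather than from parent edges within $S$, so every ancestry relation among slice members appears as a direct edge even when the intermediate AUFs are absent from $S$. Combining the two cases gives $B\prec_S A$.
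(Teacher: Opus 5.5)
Your proposal is correct and follows the paper's argument. Locality holds because $G_S$ has vertex set $S$, with descendants outside $S$ used only as evidence. The hard causal edge $(B,A)$ is preserved either by the condensation-DAG topological order when $A$ and $B$ lie in distinct SCCs, or by the causal-first ordering inside a shared SCC before $\kappa$ is applied.

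Your additional checks are sound refinements that the paper's sketch leaves implicit:
\begin{itemize}
\item excluding the reflexive case $A=B$, which also means discarding the self-loops it places in $E_S^{\mathsf{causal}}$;
\item acyclicity of the within-SCC causal subgraph via strict round monotonicity;
\item ancestry that passes through AUFs outside $S$.
\end{itemize}
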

\noindent\emph{Proof sketch.}
The precedence graph has vertex set $S$; descendants outside $S$ are used only as evidence. MRV inserts hard causal edges for same-slice ancestry and preserves them either in the condensation-DAG order or inside SCCs before applying $\kappa$.

\begin{theorem}[Soundness of Evidence-Backed Edges]
\label{thm:evidence_edge_soundness}
Every evidence-backed edge $\mathtt{Edge}(A\to B)$ inserted into $E_S^{\mathsf{svp}}$ by MRV corresponds to $A \triangleright_{\mathrm{SVP}} B$ in the committed DAG.
\end{theorem}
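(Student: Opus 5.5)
The plan is to unfold the construction of $E_S^{\mathsf{svp}}$ and match Algorithm~\ref{alg:pair_verdict} clause by clause against the three conditions of the Structural Visibility Precedence definition. By construction, $(A,B)\in E_S^{\mathsf{svp}}$ exactly when the comparator returns $\mathtt{Edge}(A\to B)$. That can happen in one of two ways: the call on the ordered pair $(A,B)$ exits through the $pos\land\neg neg$ branch, or the call on $(B,A)$ exits through the $neg\land\neg pos$ branch. These are symmetric, since swapping the arguments negates $\Delta$ and swaps the roles of $pos$ and $neg$. We also have $s(A,B)=s(B,A)$ and $H(A,B)=H(B,A)$. So it suffices to treat the first case and then note that the second reduces to it with the window unchanged.

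\textbf{Step 1 (maturity).} The algorithm returns $\mathtt{Abstain\text{-}Truncated}$ before any edge branch whenever either AUF is immature. Reaching an $\mathtt{Edge}$ return therefore implies that both $\mathsf{mature}(A)$ and $\mathsf{mature}(B)$ hold, which is condition (1).

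\textbf{Step 2 (loop invariant).} The loop runs over $t=s+1,\dots,H$ with $s=\max(r(A),r(B))=s(A,B)$ and $H=\max(h_A,h_B)=H(A,B)$, so it ranges over exactly $\mathcal{W}(A,B)$. By induction on the iterations, at exit we have
\[
pos \iff \exists t\in\mathcal{W}(A,B):\ \Delta(A,B,t)\ge f+1,
\]
\[
neg \iff \exists t\in\mathcal{W}(A,B):\ \Delta(B,A,t)\ge f+1.
\]
The second equivalence uses $\Delta(A,B,t)\le -(f+1)\iff \Delta(B,A,t)\ge f+1$. If the window is empty, both flags stay false and no edge is returned, which is consistent with the convention that no precedence is inferred when $H(A,B)\le s(A,B)$. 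With the invariant in hand, $pos\land\neg neg$ gives conditions (2) and (3) directly.

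\textbf{Step 3 (same quantities as the definition).} The values $C_A(t)$ and $C_B(t)$ used by the comparator must be the same quantities that appear in the definition, evaluated on the committed DAG. This is the main point needing care. In the online procedure (Algorithm~\ref{alg:online_exec}), a pair is evaluated only once $R\ge H(A,B)$, so every round in the window is already committed. The count $C_X(R)$ is computed by traversing backward from the canonical AUFs $Y_{c,R}$ and counting each creator at most once per round, which matches the definition of $C_X(t)$.

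I would state one explicit assumption: the bounded traversal restricted to active AUFs still reaches every active $X\in\mathsf{Anc}(Y_{c,t})$. This holds because $X$ stays active until all of its pairs are frozen, and $t\le H(A,B)\le T(S)$.

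I also need the stopping times and maturity flags the algorithm sets to agree with the formal $h_X$ and $\mathsf{mature}(X)$. The algorithm fixes $h_X$ at the first round where $C_X\ge 2f+1$, or else at the cap $r(X)+W_{\max}$, and this is exactly the minimum in the formal definition.

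Finally, since verdicts are frozen and the committed prefix only grows, the frozen verdict remains valid at the sealing time. Combining Steps 1 to 3 yields $A\triangleright_{\mathrm{SVP}}B$.
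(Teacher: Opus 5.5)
Your proposal is correct and follows the paper's own argument, which is exactly your Steps 1--2. An $\mathtt{Edge}$ return is reachable only past the maturity gate and with $pos\land\neg neg$ over $\{s(A,B)+1,\dots,H(A,B)\}=\mathcal{W}(A,B)$, which are the three SVP conditions. Your treatment of the reversed call on $(B,A)$ and of the empty window only makes explicit what the paper leaves implicit.

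Step 3 goes beyond the paper, which simply takes the comparator's inputs to be the formal $C_X(t)$, $h_X$ and $\mathsf{mature}(X)$. If you keep it, note that equating the online $h_X$ with the formal minimum also requires $X$'s visibility profile to cover every round from $r(X)$ onward. Algorithm~\ref{alg:online_exec} processes only newly delivered frontier rounds, so it does not make this explicit for AUFs whose slice is delivered after the frontier has already passed $r(X)$.
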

\noindent\emph{Proof sketch.}
The comparator emits $\mathtt{Edge}(A\to B)$ only when both AUFs are mature, the post-coexistence window contains a positive $f+1$ visibility margin, and it contains no opposing strong signal. These are exactly the SVP conditions.

\begin{theorem}[Completeness for Structural Visibility Precedence]
\label{thm:svp_completeness}
For any pair $(A,B)$ in $S$, if $A \triangleright_{\mathrm{SVP}} B$, then MRV freezes $\mathtt{Edge}(A\to B)$ once $R\ge H(A,B)$ and includes $(A,B)$ in $E_S^{\mathsf{svp}}$ when $S$ is sealed.
\end{theorem}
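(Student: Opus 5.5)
The plan is to run the comparator of Algorithm~\ref{alg:pair_verdict} on a pair with $A \triangleright_{\mathrm{SVP}} B$ and show that it must return $\mathtt{Edge}(A\to B)$. This is essentially the converse of Theorem~\ref{thm:evidence_edge_soundness}. I would do it in three steps, matching the three SVP conditions to the three branches of the comparator, and then add a short argument about timing and freezing.

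\emph{Step 1 (the inputs are fixed and available).} First I would show that once the committed frontier reaches $R \ge H(A,B)$, the stopping times $h_A,h_B$, the maturity flags, and every $C_A(t),C_B(t)$ for $t\le H(A,B)$ are fully determined by the common committed prefix. The reason is that each $h_X$ is at most $r(X)+W_{\max}$ (Theorem~\ref{thm:bounded_completion}), and Algorithm~\ref{alg:online_exec} fixes $h_X$ at the first round where $C_X \ge 2f+1$, or at the cap otherwise. This agrees with the definition of $h_X$ in Section~\ref{sec:model}. Since $H(A,B)=\max(h_A,h_B)$, the unfrozen-pair loop fires exactly once, namely at the first frontier with $R\ge H(A,B)$, and it invokes the comparator on exactly these values. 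The step I expect to need the most care is this online-versus-definitional correspondence. The comparator must read the same $C_X(t)$ as the definition, so I must check that the bounded backward traversal from each $Y_{c,t}$ reaches every active $X$ with $X\in\mathsf{Anc}(Y_{c,t})$. To argue this, I would use the fact that the traversal is restricted only to rounds at least $r(X)$ for active AUFs, so no ancestor relation inside the horizon is missed. I would also note that visibility tracking for $X$ continues past $h_X$ until all of its pairs are frozen.

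\emph{Step 2 (matching conditions to branches).} Condition (1) of SVP says both $A$ and $B$ are mature, so the Abstain-Truncated guard does not fire. The loop then ranges over exactly $\mathcal{W}(A,B)=\{s+1,\dots,H\}$ and computes $\Delta(A,B,t)$. Condition (2) supplies a round $t$ in this window with $\Delta\ge f+1$, so $pos$ becomes true. Condition (3) says no $t$ in the window has $\Delta(B,A,t)\ge f+1$. This is equivalent to no $t$ having $\Delta(A,B,t)\le -(f+1)$, so $neg$ stays false. The first branch therefore returns $\mathtt{Edge}(A\to B)$. Condition (2) also forces the window to be nonempty, so the degenerate case $H\le s$ cannot occur.

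\emph{Step 3 (the edge survives to sealing).} Because the pair is frozen exactly once and never rewritten, the verdict persists to $T(S)\ge H(A,B)$. At sealing, $E_S^{\mathsf{svp}}$ is defined as the set of frozen $\mathtt{Edge}$ verdicts, so it contains $(A,B)$. Finally, I would note that the pair is evaluated in both orientations, $(A,B)$ and $(B,A)$. By antisymmetry of $\Delta$, the $(B,A)$ evaluation returns $\mathtt{Edge}(A\to B)$ as well, so the two orientations are consistent.
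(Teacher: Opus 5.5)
Your proposal is correct and follows essentially the same route as the paper's sketch: once $R\ge H(A,B)$ the window values are fixed, the three SVP conditions are exactly the maturity gate and the $pos \land \neg neg$ branch of Algorithm~\ref{alg:pair_verdict}, and the frozen verdict is inserted into $E_S^{\mathsf{svp}}$ at sealing. Your two additional checks, that the online counts agree with the definitional $C_X(t)$ and that the $(B,A)$ orientation yields the same edge, are sound elaborations of points the paper's sketch leaves implicit.
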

\noindent\emph{Proof sketch.}
At $R\ge H(A,B)$, all rounds in the post-coexistence window are fixed. Since SVP is exactly the mature, one-sided, non-conflicting strong-signal condition tested by the comparator, MRV freezes the edge verdict and later inserts it into the sealed precedence graph.

A certified SVP edge is \emph{enforceable} if its endpoints lie in distinct SCCs of the sealed precedence graph and therefore induce a constraint in the condensation DAG.

\begin{theorem}[Conditional AUF-Level Structural Fairness]
\label{thm:conditional_structural_fairness}
For any committed execution slice $S$, MRV's final order $\prec_S$ preserves every enforceable certified SVP relation. Deterministic completion introduces no structural precedence claims.
\end{theorem}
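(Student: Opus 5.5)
The argument splits along the statement's two claims: preservation of enforceable SVP relations, and the absence of claims produced by completion. Fix a sealed slice $S$ at some frontier $R\ge T(S)$. By Theorem~\ref{thm:bounded_completion} such a sealing occurs, and every pairwise verdict is frozen.

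\textbf{Preservation.} Let $A \triangleright_{\mathrm{SVP}} B$ be a certified, enforceable relation.
\begin{enumerate}
\item By Theorem~\ref{thm:svp_completeness} (equivalently Theorem~\ref{thm:evidence_edge_soundness} read in reverse), $(A,B)\in E_S^{\mathsf{svp}}\subseteq E_S$.
\item Let $C_A \neq C_B$ be the SCCs of $G_S$ containing $A$ and $B$; they are distinct by the definition of enforceability.
\item The edge $(A,B)$ induces the arc $C_A\to C_B$ in the condensation DAG. The condensation is acyclic, which is the standard SCC fact.
\item MRV orders the condensation by a topological order, breaking ties among eligible SCCs by $\kappa(C)$. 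Any topological order places $C_A$ entirely before $C_B$.
\item The final sequence $\prec_S$ is the concatenation of the SCC blocks in this order, each block internally ordered. Hence every member of $C_A$ precedes every member of $C_B$, and in particular $A\prec_S B$.
\end{enumerate}
The step needing care is the last one. I would make explicit, from Algorithm~\ref{alg:online_exec}, that the intra-SCC ordering never interleaves elements of different SCCs. This is true because each SCC is ordered separately and then emitted as a contiguous block.

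\textbf{No claims from completion.} This is essentially definitional. The set of structural precedence claims is defined as $E_S^{\mathsf{svp}}$, restricted to its enforceable edges. That set is fixed once the comparator's verdicts are frozen, before $\kappa$ is ever consulted. The completion key appears in only two places: the lift $\kappa(C)$ for ties between SCCs, and the ordering of causally incomparable AUFs inside an SCC. Neither use inserts edges into $E_S^{\mathsf{svp}}$ or changes any frozen verdict.

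Consequently, any pair $(X,Y)$ with $X\prec_S Y$ whose order is fixed only by $\kappa$ has no edge $(X,Y)$ in $E_S^{\mathsf{svp}}$ that survives condensation. Such a pair therefore falls under residual ambiguity as defined in the conditional fairness definition, not under evidence-backed precedence. Causal edges are likewise not fairness claims by definition, and Theorem~\ref{thm:slice_locality_causal_consistency} shows they are respected within SCCs.

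I expect no real obstacle here. The main risk is precision: separating the order $\prec_S$ that MRV outputs from the claim set $E_S^{\mathsf{svp}}$, and verifying the block-contiguity property of the SCC linearization.
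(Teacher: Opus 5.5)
Your proposal is correct and follows the paper's own argument: an enforceable certified edge $(A,B)\in E_S^{\mathsf{svp}}$ becomes an arc between distinct SCCs of the condensation DAG, so any topological order (with $\kappa(C)$ tie-breaking) places $A$'s component before $B$'s. For the second claim, both you and the paper observe that $\kappa$ only resolves residual choices and never writes into $E_S^{\mathsf{svp}}$; your explicit check that each SCC is emitted as a contiguous block is a detail the paper's sketch leaves implicit, and it is the right thing to make precise.
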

\noindent\emph{Proof sketch.}
Any enforceable certified SVP edge appears as an edge in the condensation DAG, so every topological order places its source SCC before its destination SCC. Edges absorbed into SCCs are treated as graph-level ambiguity; completion by $\kappa$ resolves only residual choices and is not inserted into $E_S^{\mathsf{svp}}$.

\section{Evaluation}
\label{sec:evaluation}

We evaluate MRV as a post-consensus structural ordering layer for a Narwhal/Tusk-style DAG-BFT stack. The evaluation asks whether MRV preserves the throughput-oriented design point of DAG-based BFT while adding limited post-commit ordering delay and local computation. We compare three configurations: native Narwhal/Tusk, Narwhal/Tusk with MRV enabled, and a DoD-style graph-ordering reference implemented in the same experimental harness. These configurations separate three design points: fairness-agnostic deterministic traversal, MRV's post-consensus interpretation of committed DAG evidence, and explicit graph-based ordering incorporated into the system pipeline.

Our experiments vary in offered load, fault setting, and batch size. The offered-load experiment measures whether MRV sustains the native throughput/latency curve as clients increase pressure on the system. The fault-setting experiment evaluates how MRV behaves as quorum and visibility thresholds change. The batch-size experiment studies how MRV interacts with the payload granularity of DAG vertices, which affects both base dissemination and post-consensus ordering work.

\subsection{Experimental Setup}
\label{sec:eval_setup}

\textbf{Deployment.}
To match the wide-area setting targeted by DAG-based BFT protocols, we follow the geo-distributed methodology used by Narwhal/Tusk. We deploy validators on Amazon Web Services using m5.xlarge instances across five regions: N. Virginia (us-east-1), N. California (us-west-1), Sydney (ap-southeast-2), Stockholm (eu-north-1), and Tokyo (ap-northeast-1). Each instance hosts one validator. Benchmark clients submit transactions at controlled offered loads for a fixed experiment duration.

\textbf{Implementation.}
Our implementation starts from the open-source Narwhal/Tusk codebase \footnote{\url{https://github.com/asonnino/narwhal}}. In the native baseline, committed DAG outputs are delivered using the protocol's standard deterministic traversal. MRV is added as a separate ordering layer between consensus delivery and execution. Concretely, after Tusk commits a leader and exports the corresponding committed DAG output, MRV constructs the committed execution slice by selecting newly delivered AUFs, maintains the active per-slice visibility state, freezes pairwise verdicts once their horizons are reached, assembles the precedence graph, and outputs a slice-local execution order. MRV does not change Narwhal's worker dissemination, Tusk's voting logic, or the consensus commit rule; its additional work is local post-consensus processing over committed metadata. We also implement a DoD-style graph-ordering reference following DoD~\cite{nagda2025dag}. We include DoD because it is the closest order-fairness design built for a DAG-BFT setting, incorporating explicit graph-based ordering evidence into the system pipeline. Built on the same Narwhal/Tusk codebase, it changes the path before consensus delivery: workers disseminate local-order graphs through Narwhal's mempool, derive global-order graphs locally, and process those graphs in the role normally played by Tusk batches.

We report throughput as ordered transactions per second (TPS), measured after the corresponding ordering pipeline produces its output. We report end-to-end latency as the time from client submission until the transaction is output by the evaluated pipeline. For native Narwhal/Tusk, this is the latency reported by the original implementation; for MRV and the DoD-style baseline, it includes the additional ordering work performed by the corresponding layer. Unless otherwise specified, each experiment follows the Narwhal/Tusk configuration: one worker per validator, colocated benchmark clients, 512B transactions, 500KB batches, 1KB headers, a 200ms maximum batch delay, a 200ms maximum header delay, and a 50-round garbage-collection depth. Individual experiments vary the offered load, batch size, or configured fault setting from this baseline as described below.

\begin{figure}[htbp]
    \centering
    \includegraphics[width=\columnwidth]{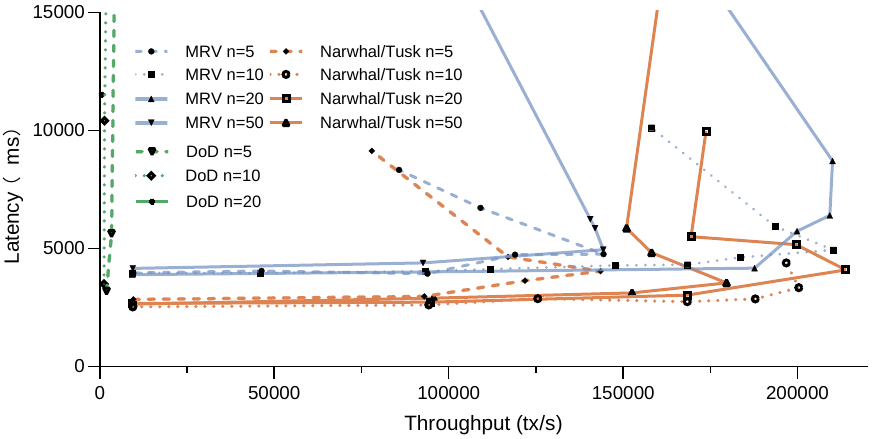}
    \caption{Comparative throughput-latency performance for MRV, Narwhal/Tusk and DoD.}
    \label{fig:ex-1}
\end{figure}

\subsection{Throughput and Latency under Offered Load}
\label{sec:eval_load}

We first measure the throughput-latency envelope under increasing offered load. Figure~\ref{fig:ex-1} reports achieved TPS and end-to-end ordering latency for native Narwhal/Tusk, Narwhal/Tusk with MRV, and the DoD-style graph-ordering reference in the same geo-distributed harness.

Across committee sizes, Narwhal/Tusk and MRV exhibit the expected saturation pattern. Throughput initially increases with offered load while latency remains relatively stable; after the knee point, additional load no longer improves throughput, and latency grows rapidly. MRV follows the same overall envelope as the native pipeline. With 5 validators, MRV reaches 144K TPS, matching the native configuration. With 10 validators, MRV reaches 210K TPS, comparable to the native peak. With 20 validators, both systems remain in the 200K TPS regime. In the 50-validator geo-distributed deployment, MRV still sustains more than 140K TPS, showing that the post-consensus ordering layer remains compatible with high-throughput DAG-BFT execution at a larger scale. The primary cost of MRV appears in latency rather than peak throughput. Before saturation, MRV adds a moderate end-to-end delay relative to native Narwhal/Tusk, reflecting bounded visibility accumulation and slice-local graph completion after commitment. This behavior is consistent with MRV's placement: ordering work is kept outside the consensus message path, but it still contributes to the final output latency observed by clients.

The DoD-style reference occupies a lower-throughput, higher-latency region in this geo-distributed harness, consistent with the cost of constructing and disseminating explicit ordering graphs through the system pipeline. MRV instead reuses committed DAG structure after consensus, so its overhead appears primarily as bounded post-commit ordering delay rather than reduced peak throughput.

\begin{figure}[htbp]
    \centering
    \includegraphics[width=\columnwidth]{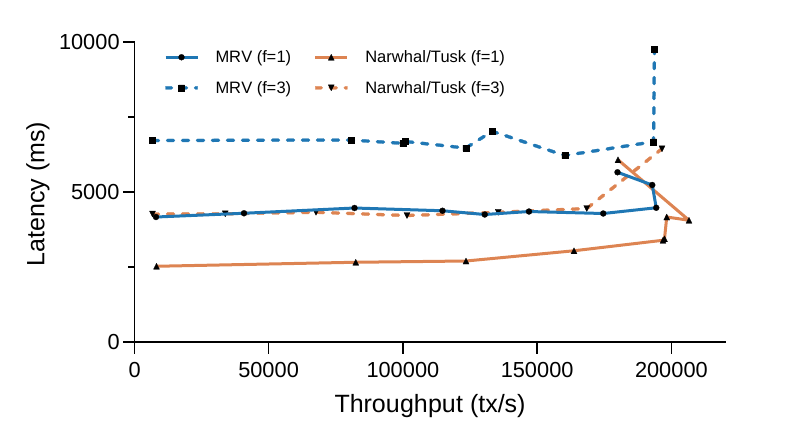}
    \caption{Throughput-latency under different configured fault-tolerance parameters.}
    \label{fig:ex-2}
\end{figure}

\subsection{Impact of Fault Setting}
\label{sec:eval_faults}

We next vary the configured fault-tolerance parameter while fixing the committee size at 10 validators. This experiment compares native Narwhal/Tusk and Narwhal/Tusk with MRV under $f=1$ and $f=3$. The parameter directly affects MRV's evidence thresholds: AUF maturity uses the $2f+1$ visibility threshold, while pairwise structural precedence uses an $f+1$ one-sided visibility margin.

Figure~\ref{fig:ex-2} shows that changing the configured $f$ has limited effect on achieved throughput in this deployment: both native Narwhal/Tusk and MRV remain near the same high-throughput region before saturation. The main effect appears in latency. Under $f=1$, MRV adds a moderate latency gap relative to native Narwhal/Tusk while maintaining a similar saturation point. Under $f=3$, the MRV curve shifts upward because stronger visibility thresholds require additional committed-DAG evidence before AUFs and pairwise verdicts can seal. The native curves provide a baseline for the underlying DAG-BFT pipeline; the additional upward shift of MRV reflects post-consensus evidence collection rather than extra consensus-path communication. Overall, in this configuration and offered-load range, changing $f$ is most visible in MRV's ordering latency. This matches MRV's design: larger $f$ raises the local evidence thresholds used for maturity and pairwise precedence, while MRV itself does not add consensus-path messages.

\begin{figure}[htbp]
    \centering
    \includegraphics[width=\columnwidth]{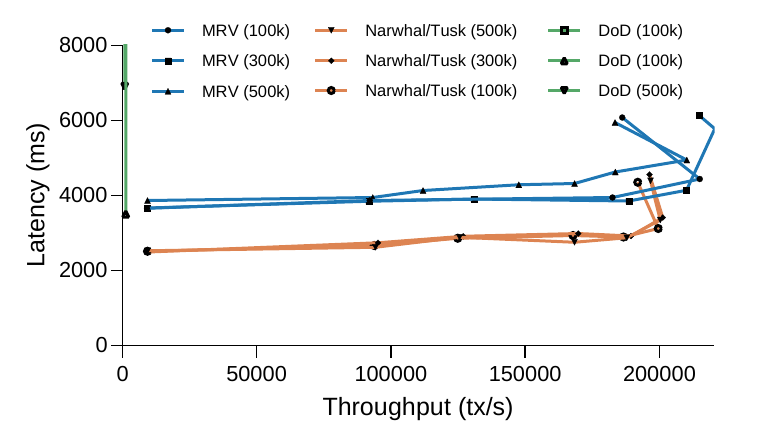}
    \caption{Throughput-latency under different batch sizes.}
    \label{fig:ex-3}
\end{figure}

\subsection{Impact of Batch Size}
\label{sec:eval_batch}

We study the effect of batch size on MRV. This experiment fixes the committee size at 10 validators and varies the batch size while measuring TPS and end-to-end ordering latency. We use this setting to isolate payload granularity: batch size changes the amount of transaction data carried by each DAG unit, while MRV's ordering work is primarily over committed AUFs and their visibility relations.

Figure~\ref{fig:ex-3} shows that MRV and native Narwhal/Tusk are insensitive to the tested batch sizes. Across 100KB, 300KB, and 500KB batches, MRV remains in the same high-throughput regime and follows the saturation pattern observed in the offered-load experiment. This suggests that MRV's post-consensus ordering cost is driven mainly by the number of committed AUFs and pairwise visibility processing, rather than by transaction payload size. The DoD-style reference remains in a lower-throughput region in the geo-distributed deployment. In this setting, wide-area dissemination and graph-processing overhead dominate the batch-size effect. The supplementary material discusses deployment differences and the role of data-dependent workload structure in interpreting our DoD-style measurements relative to DoD's published evaluation.


\section{Conclusion}
MRV is a post-consensus ordering layer that reuses committed DAG structure as execution-ordering evidence. By extracting creator-level visibility from authenticated round and ancestry metadata, MRV adds conservative AUF-level precedence constraints while bounding evidence collection and preserving causal consistency. Our prototype shows that this design preserves the high-throughput regime of the base DAG-BFT stack with limited overhead, offering a practical control point for structural ordering without moving ordering logic onto the consensus path.


\bibliographystyle{ACM-Reference-Format}
\bibliography{reference}

\end{document}